 \definecolor{BLACK}{gray}{0}
 \definecolor{WHITE}{gray}{1}
 \definecolor{RED}{rgb}{1,0,0}
 \definecolor{GREEN}{rgb}{0,1,0}
 \definecolor{BLUE}{rgb}{0,0,1}
 \definecolor{CYAN}{cmyk}{1,0,0,0}
 \definecolor{MAGENTA}{cmyk}{0,1,0,0}
 \definecolor{YELLOW}{cmyk}{0,0,1,0}
  \theoremstyle{plain}
  \newtheorem*{thm*}{\protect\theoremname}
\theoremstyle{plain}
\newtheorem{thm}{\protect\theoremname}
  \theoremstyle{definition}
  \newtheorem{defn}[thm]{\protect\definitionname}
  \providecommand{\definitionname}{Definition}
  \providecommand{\theoremname}{Theorem}
\providecommand{\theoremname}{Theorem}
\begin{document}

\title{Sealed States And Quantum Blackmail}

\author{Andrew Lutomirski}

\affiliation{AMA Capital Management LLC, Palo Alto, CA}

\email{andy@luto.us}

\selectlanguage{english}%

\date{August 27, 2013}
\begin{abstract}
Consider a protocol in which Belinda seals a (classical) message.
She gives the resulting sealed message to Charlie, who can either
unseal and read the message or return it unopened to Belinda. If he
returns it unopened, Belinda should be able to verify that Charlie
neither read the message nor made a copy that would allow him to read
it later. Such a protocol is impossible with classical cryptography:
Charlie can copy a message and do anything he likes to that copy without
damaging the original. With quantum cryptography, on the other hand,
the no cloning theorem implies that Charlie cannot simply copy a message
and unseal the copy.

In this paper, I prove that any conventional quantum cryptographic
protocol can give at best a very weak security guarantee. However,
quantum cryptography in conjunction with classical functions that
can only be inverted by humans (i.e. CAPTCHAs) can potentially give
exponential security.
\end{abstract}
\maketitle

\section{Introduction}

\global\long\def\ii{\mathbb{I}}
\global\long\def\op#1{\mathop\textrm{#1}}

Imagine that you want to blackmail someone with incriminating documents.
You are worried that your victim might kill you instead of paying
up; for insurance, you give a copy of your incriminating documents
to your attorney to be unsealed and published in the event of your
untimely death. You meet with your victim and extort some quantum
money.

Of course, your reputation as an honest blackmailer would be destroyed
if the incriminating documents got out even though your victim paid
up. To be safe, you ask your attorney for the documents back.

Is there any way your attorney can prove that he or she did not open
and copy the documents before returning them?

With only classical techniques, the best you can do is to physically
seal the documents before giving them to your attorney and check that
the seal is intact when you get the documents back. If your attorney
is good at covertly opening envelopes or if you want to send the documents
to your attorney over the internet instead of in person, this is not
good enough. Encrypting the documents would not help because your
attorney needs to be able to read them if you die.

By the no-cloning theorem, it is conceivable that you could encode
the documents into a \emph{sealed state} and give a subsystem of that
state to your attorney. He or she cannot directly copy the state,
so perhaps any attempt to read the state would be detectable.

A quantum blackmail protocol has two players: Belinda the blackmailer
and Charlie the co-conspiring attorney. Belinda has some secret message
and, from that message, produces a state on two registers, $B$ and
$C$. She gives register $C$ to Charlie. Charlie can do one of three
things:
\begin{itemize}
\item He can \emph{unseal} the state to learn the message.
\item He can give the state in register $C$ back to Belinda and tell her
that he did not unseal the message.
\item He can cheat, trying to learn something about the message while lying
to Belinda and saying he hasn't.
\end{itemize}
If Charlie unseals the state, then the protocol is done; Belinda doesn't
need to do anything. If Charlie tells Belinda that he did not unseal
the message, then Belinda will make a measurement to decide whether
to believe him. If, in fact, Charlie did not cheat, then Belinda will
believe him with probability $1-\epsilon_{c}$, where $\epsilon_{c}$
is the protocol's completeness error. If Charlie did cheat, then Belinda
may or may not catch him and Charlie may learn something about the
message.

In the straightforward case, the message being sealed is a single
classical message, and Charlie must be able to decode the message
correctly with some probability $p$ if he chooses to unseal the state.
If he cheats, then Belinda should catch him with high probability.
This turns out to be impossible to achieve with good security: in
quantum computing, any measurement that has a nearly deterministic
outcome can be performed with almost no damage to the state. If $p$
is large, Charlie can learn the message with minimal damage to his
state and is unlikely to be caught. If $p$ is small, the protocol
is not very useful, and Charlie's chance of getting away with cheating
does not decrease rapidly as $p$ decreases. In particular, Charlie
can always recover the message with probability $p$, and Belinda
will catch him with at most probability $\epsilon_{c}+p\sqrt{1-p}+\left(1-p\right)$.
Charlie's cheating strategy does not require difficult computation.
The proof is given in Section~\ref{sec:blackmail-soundness-bound}.

Nevertheless, there are at least two ways to achieve secure quantum
blackmail protocols.

The simpler case is when Belinda has a collection of distinct pieces
of classical information and unsealing the state only needs to reveal
one of them to Charlie. This is realistic: if Belinda has several
compromising pictures of her victim, then it may be sufficient for
Charlie to have access to a single random picture as insurance. Belinda
can put a random picture in the $C$ register and keep the purification
of the state in the $B$ register; if Charlie copies the picture,
he will know which picture it was and break the entanglement between
the registers. Section~\ref{sec:blackmail-multiple-pictures} discusses
the security properties of this protocol.

The more complicated case uses a new type of quantum computing resource:
a computation that cannot be done on a quantum computer. The generic
attack against sealed states relies on \emph{reversibly} unsealing
the state. If part of the unsealing process cannot be done on a quantum
computer, then the attack may fail. Of course, any calculation that
can be done on a classical computer can, in principle, be done on
a quantum computer as well. Until someone invents perfect artificial
intelligence, however, there will be calculations that can be done
by humans but not by any computer. Since humans are not coherent quantum
computers, we can use such a calculation to implement general-purpose
sealed states. The idea is to encode a message as a superposition
of many different images or sounds, any of which can be decoded by
a human but not by a computer. Since humans cannot be run in reverse,
showing the image or sound to a human will break the superposition.
Section~\ref{sec:blackmail-captchas} gives an example of this type
of protocol.

\section{A bound on soundness and completeness}

\label{sec:blackmail-soundness-bound}Suppose that Belinda has a single
classical message $m$. She prepares a sealed message in register
$C$ and gives it to Charlie. Without loss of generality, we assume
that Belinda keeps a purification of Charlie's state in register $B$,
so the full state is $|\psi_{m}\rangle_{BC}$. Charlie knows some
efficient algorithm that takes register $C$ as input and outputs
$m$ with probability $p$.

If Charlie returns register $C$ unmodified to Belinda, then Belinda
will perform some measurement on registers $B$ and $C$ to determine
whether Charlie cheated. If Charlie did not cheat, then Belinda will
believe him with probability $1-\epsilon_{c}$, where $\epsilon_{c}$
is the completeness error of the protocol.

Charlie can try to cheat. For example, he could perform some measurement
on register $C$ to try to learn $m$ and give the state that remains
in register $C$ back to Belinda. If he tells her that he did not
look at the message, she will believe him with probability $1-s$,
where $s$ is the soundness of the protocol. In general, $s$ can
depend on Charlie's cheating strategy.
\begin{thm*}
(No quantum blackmail) Charlie has an efficient strategy that recovers
$m$ with probability $p$ such that Belinda will catch him with probability
$s\le\epsilon_{c}+p\sqrt{1-p}+\left(1-p\right)$. \end{thm*}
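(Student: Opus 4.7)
The plan is to exhibit an explicit cheating strategy for Charlie and bound its chance of detection directly. Dilate Charlie's decoding algorithm to a unitary $U$ acting on register $C$ together with a fresh ancilla $A$ initialized to $|0\rangle$, so that
\[ U\,(|\psi_m\rangle_{BC}|0\rangle_A) \;=\; \sum_{m'} \sqrt{p_{m'}}\,|\phi_{m'}\rangle_{BC}\,|m'\rangle_A, \]
where $p_m = p$ is the probability the algorithm correctly outputs $m$. Charlie's attack is: apply $U$, measure $A$ in the computational basis to obtain $\hat m$, apply $U^\dagger$, discard $A$, and return register $C$ to Belinda. This recovers $m$ with probability exactly $p$, matching the stated recovery rate.

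Next I would bound the disturbance conditioned on the event $E = \{\hat m = m\}$, which occurs with probability $p$. Conditioned on $E$, the (pure) post-measurement state on $BCA$ is $|\phi_m\rangle_{BC}|m\rangle_A$, whose inner product with the unmeasured state $U(|\psi_m\rangle|0\rangle)$ is $\sqrt{p}$. Since unitaries preserve inner products, after applying $U^\dagger$ the returned purification has fidelity $\sqrt{p}$ with $|\psi_m\rangle_{BC}|0\rangle_A$, equivalently trace distance $\sqrt{1-p}$. Tracing out $A$ cannot increase trace distance, so the reduced state $\sigma_{BC}$ that Charlie hands back differs from $|\psi_m\rangle\langle\psi_m|_{BC}$ in trace distance by at most $\sqrt{1-p}$. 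Belinda's verification is a two-outcome measurement, and acceptance probabilities of two states differ by at most their trace distance, so conditioned on $E$ she rejects with probability at most $\epsilon_c + \sqrt{1-p}$.

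Finally, on the complementary event $\neg E$ (probability $1-p$) I simply bound the conditional rejection probability by $1$, yielding
\[ s \;\le\; p\,(\epsilon_c + \sqrt{1-p}) + (1-p) \;\le\; \epsilon_c + p\sqrt{1-p} + (1-p), \]
which is the claimed bound.

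The only subtle step is the gentle-measurement argument: one must be careful that measuring $A$ in the computational basis really is projective on the composite $BCA$ system (so the fidelity between the pre- and post-measurement pure states is exactly $\sqrt{p_{\hat m}}$), and that the subsequent $U^\dagger$ and partial trace are trace-distance non-increasing. Once that is in place, the rest is arithmetic; no assumption on the structure of Belinda's verification measurement or on the form of $|\psi_m\rangle$ is needed beyond the ones in the theorem statement.
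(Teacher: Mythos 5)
Your proposal is correct and follows essentially the same route as the paper: the same compute--measure--uncompute cheating strategy, the same overlap calculation $|\langle\psi_m|\phi_m\rangle|=\sqrt{p}$ giving trace distance $\sqrt{1-p}$ in the successful branch, and the trivial bound in the failure branch. The only cosmetic differences are that you carry an explicit ancilla register $A$ where the paper folds ancillas into $C$ and uses a projective measurement $\{P_i\}$, and that you average rejection probabilities over outcomes rather than first averaging trace distances by convexity; these are equivalent.
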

\begin{proof}
If Charlie were to open the message instead of cheating, he would
perform some measurement. Without loss of generality, that measurement
consists of a unitary operator $U$ followed by a projective measurement
$\left\{ P_{i}\right\} $, both on register $C$ %
\footnote{If unsealing the state requires ancilla qubits, then those qubits
can be treated as part of register $C$. The final state of those
qubits does not matter, since Belinda does not expect Charlie to preserve
anything after unsealing the state.%
}. The projective measurement has one outcome per possible message.

Charlie can cheat with a simple algorithm. Charlie acts only on register
$C$, but we will keep track of the joint state of both registers.
The initial state is $|\psi\rangle_{BC}$. Charlie applies $\ii\otimes U$
and then makes the projective measurement $\left\{ \ii\otimes P_{i}\right\} $.
With probability $p$, Charlie gets the outcome corresponding to $m$
and learns the state. With probability $1-p$, Charlie gets a different
outcome and does not learn the state. In either case, Charlie applies
$\ii\otimes U^{\dagger}$ to the state that remains after the measurement
and gives that state to Belinda.

Let $|\phi_{i}\rangle$ be the final state of registers $B$ and $C$
conditioned on Charlie obtaining outcome $i$. Suppose that outcome
$i$ occurs with probability $q_{i}$, so that $p=q_{m}$. Marginalizing
over the outcomes of Charlie's measurement, Belinda receives the mixed
state $\sigma=\sum_{i}q_{i}|\phi_{i}\rangle\langle\phi_{i}|$.

The trace distance between the initial (i.e. untampered-with) state
and the state that Belinda tests is 
\begin{align*}
 & D\left(|\psi\rangle\langle\psi|,\sigma\right)\\
\le\, & \sum_{i}q_{i}D\left(|\psi\rangle\langle\psi|,|\phi_{i}\rangle\langle\phi_{i}|\right)\textrm{ by convexity of the trace distance}\\
\le\, & q_{m}D\left(|\psi\rangle\langle\psi|,|\phi_{m}\rangle\langle\phi_{m}|\right)+\sum_{i\ne q_{m}}q_{i}\\
=\, & pD\left(|\psi\rangle\langle\psi|,|\phi_{m}\rangle\langle\phi_{m}|\right)+\left(1-p\right)\\
=\, & p\sqrt{1-\left|\langle\psi|\phi_{m}\rangle\right|^{2}}+\left(1-p\right)\\
=\, & p\sqrt{1-p}+\left(1-p\right)
\end{align*}

The trace distance is an upper bound on the difference between the
probability that Belinda accepts $|\psi\rangle$ and the probability
that Belinda accepts $|\phi\rangle$. So $s-\epsilon_{c}\le p\sqrt{1-p}+\left(1-p\right)$.
\end{proof}

This bound is rather weak. The derivation assumes that Belinda has
access to Charlie's entire final state, including any ancilla qubits;
in practice, if Charlie uses ancillas, he would likely be better off
zeroing them before returning them to Belinda.

It is also not obvious that even this bound is achievable in practice.
A straightforward protocol using no ancillas comes up far short. Suppose
that Belinda sends Charlie the $C$ register of 
\[
|\psi_{m}\rangle_{BC}=\frac{1}{\sqrt{2}}\left[|0\rangle_{B}|0\rangle_{C}+|m\rangle_{B}|m\rangle_{C}\right],
\]
 where $|0\rangle$ is an arbitrary nonsense state. If Charlie returns
the state unopened, Belinda verifies it by projecting onto $|\psi_{m}\rangle$.
This protocol is complete (i.e.\ $\epsilon_{c}=0$). If Charlie measures
his qubit in the computational basis, Belinda will detect his cheating
with probability 50\%, which is worse than the bound of 85\%.

An improved protocol would replace $|0\rangle_{B}|0\rangle_{C}$ with
a superposition of many garbage states. This would allow Belinda to
detect Charlie's cheating with probability near one if Charlie gets
unlucky and fails to recover the message; that is, the second approximation
above would be tight. This would improve the probability of detection
to 75\%.

Regardless of how close to the soundness bound a protocol of this
type is, it will be subject to other types of attack. For example,
if Charlie is worried that Belinda is blackmailing his friend, then
he can verify that $m$ is \emph{not} a compromising message about
his friend. That is, Charlie can define a classical function $g(m')\in\left\{ 0,1\right\} $
over the space of possible messages such that $g(m')=1$ indicates
that $m'$ is a picture of Charlie's friend. Then Charlie can apply
the message recovery protocol coherently and measure $g$ on the result.
If the result is 1, then Belinda is probably blackmailing Charlie's
friend, and Belinda has a reasonable change of detecting that Charlie
cheated. If, on the other hand, the result was zero, then Charlie
did not damage the state at all and will not be detected.

To work around these attacks, we need to look at a broader class of
protocols.

\section{Multiple compromising pictures}

\label{sec:blackmail-multiple-pictures}If Belinda has more than one
message and she is willing to give Charlie a sealed state that can
only reveal one of them, then she can escape the no-quantum-blackmail
bound. For example, suppose she has $n$ compromising pictures $m_{1},\ldots,m_{n}$
of her victim, and she considers the threat of any one of them being
published to be adequate to ensure her safety. Belinda can generate
a superposition of pictures 
\[
|\psi\rangle_{BC}=\frac{1}{\sqrt{n}}\sum_{i=1}^{n}|i\rangle|m_{i}\rangle
\]
 and send register $C$ to Charlie. Of course, any tabloid would take
great pains to authenticate a damaging photograph before publishing
it, so we will assume that Charlie has no interest in cheating unless
he can recover an entire intact photograph. If he can do this, then
he knows which photograph Belinda sent him, and register $B$ must
collapse to a single value of $i$. No matter what state Charlie sends
Belinda, Belinda can detect that he cheated with probability at least
$\frac{n-1}{n}$. As $n$ becomes large, this protocol becomes more
secure.

This protocol is vulnerable to other attacks, though. Charlie could
measure a property that all $n$ pictures have in common without damaging
the state at all. For example, he could use a quantum image recognition
program to figure out who is in the pictures.

To prevent this type of attack and allow exponential security, we
need a new type of resource.

\section{CAPTCHAs}

\label{sec:blackmail-captchas}The fundamental weakness of quantum
cryptography that makes quantum blackmail difficult is that any computation
can be performed coherently. If Charlie can calculate something based
on a sealed state that Belinda gives him, then he can do the same
calculation coherently, measure some property of the result, and undo
the original calculation. If the unsealing process required that Charlie
does something that cannot be done on a quantum computer, then we
could block these attacks. If quantum computers ever become as powerful
as classical computers, then the only things that quantum computers
will not be able to do are things that no computer at all can do.

CAPTCHAs, a type of spam-preventing technology on the internet, are
based on a computation that can only be reliably done by humans. A
CAPTCHA is a ``Completely Automated Public Turing test to tell Computers
and Humans Apart'' \cite{vonahn2003captcha}. To use a CAPTCHA, a
website generates a random word or number $x$. The website then computes
a picture, sound, or other message based on $x$ that is meant to
be decoded by a human. Any human should be able to find $x$ by inspecting
the web page, but no polynomial-time algorithm should be able to find
$x$ with non-negligible probability.

A CAPTCHA is defined as a communication protocol, possibly with multiple
rounds. We need a function instead of a communication protocol, and
we need the function to be secure against quantum adversaries. We
therefore propose the following definition:
\begin{defn}
A human-invertible one-way function is a one-to-one function $f:\{0,1\}^{*}\to I$
that maps strings of bits to a set $I$ and has these properties:\end{defn}
\begin{itemize}
\item The function $f$ can be evaluated in quantum polynomial time.
\item Given $f(x)$ for unknown $x$, a human acts as an oracle that computes
$x$ in one query with negligible probability of error. This process
is incoherent: $f(x)$ leaks to the environment. Most likely, $I$
will be a set of images, sounds, or other media that a human can look
at.
\item There is no polynomial-time quantum algorithm that can compute $f^{-1}(z)$
with non-negligible probability of success. This must remain true
even if the algorithm can query $f^{-1}$ on other inputs in the range
of $f$.
\end{itemize}

Standard CAPTCHAs are randomized, but we will assume that $f$ can
derive any randomness it needs from its argument.

A human-invertible one-way function is very similar to a trapdoor
one-way function; the difference is that instead of being invertible
using a secret, it is invertible by asking a human to invert it. Without
access to a human, it works just like a trapdoor one-way function
with an unknown secret. We will therefore construct a cipher from
$f$ in much the same way that public-key ciphers are constructed
using trapdoor functions. The standard construction is Bellare and
Rogaway's OAEP \cite{bellare1995oaep}, and we will use a similar
construction. OAEP is randomized, and we will replace the randomness
with entanglement. To unseal a message, Charlie must show a particular
value of $f$ to a human, breaking the entanglement in the process.

Let $n$ be the length of the message being sealed. Following OAEP,
choose security parameters $k$ and $k_{0}$, where $n=k-k_{0}$.
Let $G:\left\{ 0,1\right\} ^{k_{0}}\to\left\{ 0,1\right\} ^{n}$ be
a pseudorandom generator and $H:\left\{ 0,1\right\} ^{n}\to\left\{ 0,1\right\} ^{k_{0}}$
be an ideal hash function. Both $G$ and $H$ are publicly known.
The sealed state encoding $y$ is 
\[
|\psi_{y}\rangle_{BC}=\frac{1}{\sqrt{2^{k_{0}}}}\sum_{r\in\left\{ 0,1\right\} ^{k_{0}}}|r\rangle_{B}|\mathcal{E}_{r}^{G,H}(y)\rangle_{C},
\]
 where 
\[
\mathcal{E}_{r}^{G,H}(y)=f(y\oplus G(r)\parallel r\oplus H(y\oplus G(r)))
\]
 is the encrypted version of $y$. The $\parallel$ operator denotes
concatenation of strings of bits, and $\oplus$ is bitwise exclusive
or.

To seal a message $y$, Belinda generates $|\psi_{y}\rangle_{BC}$
and gives Charlie the $C$ register. To unseal the message, Charlie
measures register $C$ and shows it to a human. The human inverts
$f$ to recover $y\oplus G(r)\parallel r\oplus H(y\oplus G(r))$ for
some unknown value of $r$. Charlie then computes $H(y\oplus G(r))$
to recover $r$, $G(r)$, and $y$.

If Charlie does not unseal the message, he returns register $C$ to
Belinda, who measures $|\psi_{y}\rangle\langle\psi_{y}|$ on the combined
state in registers $B$ and $C$. If the outcome is $1$, then she
believes Charlie; if not, she accuses Charlie of cheating. This protocol
is fully complete -- if Charlie does not cheat, then Belinda will
believe him with probability one.

If Charlie cheats, he can use an arbitrarily complicated strategy
to select the values of $f$ that he asks a human to invert. Let $Q\subseteq I$
be the set of all $f$ values that he will ever ask a human to invert.
Of course, Charlie can ask a human to evaluate $f^{-1}$ even after
Belinda has made her measurement, so Belinda cannot possibly know
the set $Q$. Nonetheless, Charlie is constrained to make a polynomial
number of queries, so $\left|Q\right|=O(\op{poly}(k,k_{0}))$. Let
\[
R=\left\{ r:\mathcal{E}_{r}^{G,H}(y)\in Q\right\} 
\]
 and define projectors 
\begin{align*}
T & =|\psi_{y}\rangle\langle\psi_{y}|=\frac{1}{2^{k_{0}}}\sum_{r,r'\in\{0,1\}^{k_{0}}}|r\rangle_{B}|\mathcal{E}_{r}^{G,H}(y)\rangle_{CC}\langle\mathcal{E}_{r'}^{G,H}(y)|_{B}\langle r'|\text{ and}\\
U & =\frac{1}{2^{k_{0}}-\left|R\right|}\sum_{r,r'\in\{0,1\}^{k_{0}}\setminus R}|r\rangle_{B}|\mathcal{E}_{r}^{G,H}(y)\rangle_{CC}\langle\mathcal{E}_{r'}^{G,H}(y)|_{B}\langle r'|.
\end{align*}

Belinda's measurement is $T$, the projector onto the uniform superposition
of all $r$ values, each paired with its corresponding $\mathcal{E}_{r}$
value. The projector $U$ is almost the same; it projects onto the
uniform superposition of $r$ values that are useless to Charlie --
this is the set of values of $r$ for which Charlie will never ask
a human to invert $f(\mathcal{E}_{r}^{G,H})$. If Belinda somehow
measured $U$ and obtained the outcome 1, then Charlie could only
attempt to decrypt $f(\mathcal{E}_{r}^{G,H})$ without asking a human
to evaluate $f^{-1}(f(\mathcal{E}_{r}^{G,H}))$.

Belinda cannot measure $U$ because she does not know the set $R$,
but she can measure $T$ instead. $T$ and $U$ are both rank 1 projectors,
and they project onto states that differ negligibly from each other.
If Belinda obtains the outcome 1 from $T$ and therefore believes
Charlie, then either she got unlucky due to the difference between
$T$ and $U$, which occurs with negligible probability, or Charlie's
queries are all useless.~%
\footnote{This is not the same thing as saying that if Belinda believes Charlie,
then Charlie's queries are useless with high probability. Charlie
could, for example, unseal the message and give Belinda the classical
state $|\mathcal{E}_{r}^{G,H}(y)\rangle$ in register $C$. Belinda
will believe him with negligible probability, but if she believes
him then his queries are still useful with probability~1.%
} 

In the event that Charlie's queries are useless, then recovering $x$
is mostly equivalent to breaking OAEP, which should be impossible
as long as the human-invertible one-way function $f$ is secure. The
standard OAEP security proof~\cite{bellare1995oaep} fails in this
context for two reasons: it assumes a classical adversary, and it
assumes that the trapdoor function is a permutation; $f$ is merely
one-way. The latter problem should be easily correctable, but the
former will be more challenging. I am unaware of any meaningful security
proofs of OAEP (or, for that matter, any other public-key cipher construction)
against quantum adversaries.

This protocol is also somewhat resistant to attacks that break the
human-invertible one-way function after Belinda makes her measurement
-- if Belinda's test passes with non-negligible probability, then
Charlie does not know anything that would let him reliably determine
the value of $r$. It is hard to imagine that any quantum state he
generated from register $C$ without being able to invert $f$ would
let him later recover $y$ even with unlimited computational power.

If Belinda wants to seal a very long message, it could be more efficient
to generate a short random key, encrypt the message against the key,
and seal the key instead of the message.

\section{Open questions}

These protocols require Belinda to store a quantum state that is entangled
with the sealed state. It should be possible to eliminate the entanglement.
One naive approach is to eliminate register $B$, making the sealed
state 
\[
\frac{1}{\sqrt{2^{k_{0}}}}\sum_{r\in\left\{ 0,1\right\} ^{k_{0}}}|\mathcal{E}_{r}^{G,H}(y)\rangle.
\]
 This is neither practical nor secure: preparing this state is likely
as hard as index erasure \cite{amrr2011indexerasure}, and if Charlie
unseals the message then he can recreate the original state with whatever
algorithm Belinda used to prepare it in the first place. The difficulty
in preparing the state can be resolved by giving Charlie both registers,
that is 
\[
\frac{1}{\sqrt{2^{k_{0}}}}\sum_{r\in\left\{ 0,1\right\} ^{k_{0}}}|r\rangle|\mathcal{E}_{r}^{G,H}(y)\rangle.
\]
 This is insecure for the same reason. An improved version would be
\[
\frac{1}{\sqrt{2^{k_{0}}}}\sum_{r\in\left\{ 0,1\right\} ^{k_{0}}}e^{i\phi(r)}|r\rangle|\mathcal{E}_{r}^{G,H}(y)\rangle,
\]
 where the function $\phi(r)$ is a secret known only to Belinda.
This seems likely to be secure, although the security proof would
be more complicated.

Regardless of whether Belinda needs to store a quantum state, all
of these protocols involve giving Charlie a highly entangled state.
A protocol in which Charlie's state was a tensor product of a large
number of low-dimension systems would be very interesting, since it
could be built with much simpler quantum computing technology. Designing
such a protocol that is secure even if Charlie can make entangling
measurements may be difficult.

Putting aside quantum blackmail in particular, the technology for
analysing the security of even classical cryptographic constructions
against quantum attack is limited. There is extensive literature on
the security of constructions as varied as the Luby-Rackoff block
cipher, psuedorandom functions, public-key cryptosystems, and modern
block cipher modes. Many of these are provably secure against classical
attack assuming that some underlying primitive is secure. Little progress
has been made in defining security against quantum attacks. Classical
attacks can take many forms (e.g.\ known-plaintext attacks, adaptive
chosen-ciphertext attacks, etc.); this taxonomy will need to be extended
to meaningfully discuss security in a post-quantum world (e.g.\ what
is a non-adaptive chosen-quantum-ciphertext attack, and when is it
relevant?). Even less progress has been made in proving the security
of cryptographic constructions against quantum attack; the best we
can say is that no one has found generic attacks better than Grover's
algorithm.

Until the basic technology for analyzing the security of constructions
like OAEP against quantum attacks is in place, it will be difficult
to make rigorous statements about the security of even straightforward
quantum cryptographic constructions like quantum blackmail.

\section{Acknowledgments}

I wrote most of this paper at the Massachusetts Institute of Technology,
where I was supported by the Department of Defense (DoD) through the
National Defense Science \& Engineering Graduate Fellowship (NDSEG)
Program. An earlier version of this work appears in my thesis \cite[ chapter 7]{LutoThesis}.

\bibliographystyle{utphys}
\bibliography{/home/luto/Documents/Research/Thesis/v1.1/thesis}

\end{document}